\documentclass[a4paper,UKenglish,cleveref, autoref, thm-restate]{lipics-v2021-modified}
\pdfoutput=1 

\title{Beyond Absolute Positiveness for Universally Quantified Non-Linear Polynomial Constraints} 

\author{Carsten Fuhs}{Birkbeck, University of London, UK}{c.fuhs@bbk.ac.uk}{https://orcid.org/0009-0007-3697-4383}{} 

\authorrunning{C. Fuhs} 

\Copyright{Carsten Fuhs} 

\ccsdesc[500]{Theory of computation~Equational logic and rewriting}

\ccsdesc[500]{Theory of computation~Logic and verification}

\ccsdesc[500]{Theory of computation~Constraint and logic programming}

\ccsdesc[500]{Theory of computation~Automated reasoning}

\keywords{Termination analysis, complexity analysis, term rewriting, polynomial interpretations, quantifier elimination} 

\category{} 

\relatedversion{} 

\nolinenumbers 

\EventEditors{Florian Frohn and Étienne Payet}
\EventNoEds{2}
\EventLongTitle{21st International Workshop on Termination (WST 2026)}
\EventShortTitle{WST 2026}
\EventAcronym{WST}
\EventYear{2026}
\EventDate{July 25, 2026}
\EventLocation{Lisbon, Portugal}
\EventLogo{}

\newcommand{\minX}{\mu}
\newcommand{\AAA}{\mathcal{A}}

\newcommand{\Sig}{\mathcal{F}}
\newcommand{\Var}{\mathit{\mathcal{V}ar}}

\newcommand{\aprove}{\textsf{AProVE}}

\newcommand{\sym}[1]{\mathsf{#1}}
\newcommand{\Fa}{\sym{a}}
\newcommand{\Fc}{\sym{c}}
\newcommand{\Fcons}{\sym{cons}}
\newcommand{\Ff}{\sym{f}}
\newcommand{\Fg}{\sym{g}}
\newcommand{\Fh}{\sym{h}}
\newcommand{\Fi}{\sym{i}}

\newcommand{\Fs}{\sym{s}}

\newcommand{\polt}[2]{[#2]_{#1}}
\newcommand{\polf}[2]{#2_{#1}}

\begin{document}

\maketitle

\begin{abstract}
Polynomial interpretations from function symbols to natural numbers
induce a prominent class of monotone algebras and corresponding
well-founded orders on terms,
used, e.g., for termination analysis and complexity analysis of term
rewrite systems.
Finding such polynomial interpretations for a given set of
term constraints involves solving a set of
$\exists\forall$ inequalities over the natural numbers.
Conventionally, the absolute positiveness criterion is used
to reduce $\exists\forall$ inequalities to $\exists$ inequalities.
This extended abstract reports on work in progress to go beyond
absolute positiveness, allowing for finding non-linear polynomial
interpretations that were outside the reach of existing techniques.
\end{abstract}

\section{Introduction}

Termination analysis and complexity analysis for term rewrite systems
(TRSs)\footnote{In this extended abstract, I assume familiarity with
  the basics of term rewrite systems and their termination (see, e.g.,
  \cite{baa:nip:98} for an introduction).}
use well-founded orders on terms as a core ingredient.
A practically highly relevant class of such orders is induced by
polynomial interpretations of function symbols to polynomial
functions over $\mathbb{N}$~\cite{lan:75,tur:49}.
Especially for complexity analysis of TRSs~\cite{hir:mos:08,nos:emm:gie:13},
using non-linear polynomials as part of polynomial interpretations is
important to capture algorithmic complexities of corresponding degree.
In practice, polynomial interpretations are tailored to satisfy a set of
given term constraints that represent strict or weak decrease of a rewrite step.
The check whether such a term constraint $s \succ t$ (or $s \succsim t$)
is satisfied by a
given polynomial interpretation can be reduced to the check whether
a resulting polynomial over the variables of $s$ and $t$ is positive
(or non-negative, respectively) for all instantiations of the variables
by natural numbers. The standard check based on the
\emph{absolute positiveness criterion}~\cite{hon:jak:98,con:mar:tom:urb:05}
requires that all coefficients of this polynomial must be
at least non-negative
and the constant addend must be positive (or non-negative, respectively).
In this extended abstract, I argue that this criterion is needlessly
restrictive for non-linear polynomial interpretations.
I propose an explicit case analysis that
treats natural numbers below a certain threshold $\minX \in \mathbb{N}$
as individual special cases
and uses absolute positiveness only from this threshold on, when
the asymptotic monotonicity behaviour of the polynomial has been reached.

This extended abstract is organised as follows:
\autoref{sec:illustrative} explains the idea with the help of a small example.
\autoref{sec:automation} discusses how the approach can be automated in
a termination or complexity analysis tool.
\autoref{sec:limitation} shows why non-linearity is a prerequisite for
usefulness of the criterion.
\autoref{sec:related} discusses related work, and
\autoref{sec:conclusion} concludes with a discussion of future work.

\section{Illustrative Example}
\label{sec:illustrative}

In this section, I recapitulate polynomial interpretations
and illustrate the problem that this extended abstract
aims to tackle as well as
its proposed solution with the help of an example.

\begin{example}
Consider the following (utterly artificial, yet rather
illustrative) rewrite rule:
\begin{gather}
\Fg(\Fs(x),\Fa) \to \Fc(\Fs(\Fs(x)), x, x, x, x, x) \label{illustrative:rule}
\end{gather}
To prove termination of this rewrite rule, we could use a
strictly monotonic polynomial
interpretation to induce a reduction order $\succ$ (a well-founded
strict order on terms that is
closed under contexts and substitutions) for which the
following term constraint holds:\footnote{Of course, many other
  proof techniques to find termination proofs are applicable here
  as well -- this rewrite rule is not meant as a challenge
  for termination analysis tools in general.}
\begin{gather}
\Fg(\Fs(x),\Fa) \succ \Fc(\Fs(\Fs(x)), x, x, x, x, x) \label{illustrative:order}
\end{gather}
To this end, consider the following strictly monotonic polynomial
interpretation $\AAA$ given for all function symbols of our signature
$\Sig = \{ \Fa, \Fs, \Fg, \Fc \}$:
\begin{center}
$\polf{\AAA}{\Fa} = 1 \qquad
\polf{\AAA}{\Fs}(x_1) = 1 + x_1 \qquad
\polf{\AAA}{\Fg}(x_1,x_2) = x_1 + x_1 \cdot x_2 + x_1^2 + x_2$\\[6pt]
$\polf{\AAA}{\Fc}(x_1,x_2,x_3,x_4,x_5,x_6) = x_1 + x_2 + x_3 + x_4 + x_5 + x_6$
\end{center}
For terms,
we interpret term variables as variables over $\mathbb{N}$
and extend the interpretation homomorphically from function symbols to
function applications so that we can interpret terms as polynomial
functions:
$\polt{\AAA}{x} = x$ and
$\polt{\AAA}{f(t_1,\dots,t_k)} =
 \polf{\AAA}{f}(\polt{\AAA}{t_1},\dots,\polt{\AAA}{t_k})$.\footnote{A formal
  treatment of polynomial interpretations
  can be found, e.g., in \cite{con:mar:tom:urb:05}.}
 For example, for the term $t = \Fs(x)$, we
have
$\polt{\AAA}{t} = \polt{\AAA}{\Fs(x)} = \polf{\AAA}{\Fs}(\polt{\AAA}{x}) =
\polf{\AAA}{\Fs}(x) = 1 + x$.
For the term constraint \eqref{illustrative:order},
we need to show that
$\AAA$ makes the following inequality valid over $\mathbb{N}$:
\begin{gather}
\polt{\AAA}{\Fg(\Fs(x),\Fa)}
  > \polt{\AAA}{\Fc(\Fs(\Fs(x)), x, x, x, x, x)} \label{illustrative:polt}
\end{gather}
By interpreting the terms using $\AAA$, we obtain the following
inequality with implicit universal quantification of the variable $x$:
\begin{gather}
4 + 4 \cdot x + x^2 > 2 + 6 \cdot x
\end{gather}
By moving all monomials to one side, we get the following equivalent
inequality:
\begin{gather}
2 + (-2) \cdot x + 1 \cdot x^2 > 0 \label{illustrative:ineq}
\end{gather}
A widely used criterion for validity of polynomial inequalities over the
natural numbers is
the \emph{absolute positiveness
  criterion}~\cite{hon:jak:98,con:mar:tom:urb:05}:
a polynomial $p(x_1,\dots,x_n)$ given as a sum of monomials is
\emph{absolutely positive} if the constant addend is greater than 0
and moreover the coefficients of all other monomials are greater than or
equal to 0. Absolute positiveness of $p(x_1,\dots,x_n)$ implies that
$p(x_1,\ldots,x_n) > 0$ holds for all
$x_1,\dots,x_n \in \mathbb{N}$~\cite{hon:jak:98}.\footnote{Contejean
et al.~\cite{con:mar:tom:urb:05} describe a version of absolute positiveness
for inequalities of the form $p(x_1,\ldots,x_n) \geq 0$. Here
the constant addend must be greater than \emph{or equal to} 0.}

In our example, absolute positiveness would mean that the following
conjunction of inequalities must hold:
$2 > 0 \land -2 \geq 0 \land 1 \geq 0$.
As the inequality $-2 \geq 0$ clearly does \emph{not} hold,
our polynomial is not absolutely positive, and
our attempt to prove termination with the help of the interpretation
$\AAA$ and the absolute positiveness criterion was unsuccessful.

But perhaps we can do better than absolute positiveness.
In the example, the inequality~\eqref{illustrative:ineq}
of the form $p(x) > 0$ can be split into two cases
for the universally quantified variable $x$,
by demanding equivalently that
$p(0) > 0$ holds and that $p(x+1) > 0$ holds for all
$x \in \mathbb{N}$ (i.e., $p(x) > 0$ must hold for $x \geq 1$).
In the example, the inequality $p(0) > 0$ boils down to $2 > 0$, which
clearly holds.
The inequality $p(x+1) > 0$ can be expressed as follows:
\begin{gather}
2 + (-2) \cdot (x+1) + 1 \cdot (x+1)^2 > 0
\end{gather}
By multiplying out and simplifying, we get:
\begin{gather}
1 + 1 \cdot x^2 > 0 \label{illustrative:ineq_one}
\end{gather}
For inequality \eqref{illustrative:ineq_one}, now absolute positiveness \emph{is}
applicable since the resulting inequalities indeed hold:
$1 > 0 \land 1 \geq 0$.
Thus, the polynomial interpretation $\AAA$ does induce a direct termination
proof for the rewrite rule \eqref{illustrative:rule} after all!
In the terminology of Hong and Jaku\v{s}~\cite{hon:jak:98}, the polynomial
$p(x)$ is absolutely positive \emph{from 1}: for all
$x \geq 1$,
the inequality $p(x) > 0$ holds over $\mathbb{N}$.

Interestingly, exhaustive search shows that there is no strictly monotonic
polynomial
interpretation to $\mathbb{N}$ with degree at most $2$ and coefficients
from $\{0,1\}$ for which absolute positiveness would allow us
to show validity of inequality \eqref{illustrative:polt} directly.
Thus, the case split as a preprocessing has allowed us to find a
direct termination
proof with a polynomial interpretation that was out of
reach with this search space for interpretations before.

\end{example}

\section{Automation}
\label{sec:automation}
In this section, I explain how to automate the approach sketched in
\autoref{sec:illustrative}.

The scenario is the following:
given a set of term constraints $\mathcal{C}$, we want to
find a (weakly or strictly monotonic) polynomial interpretation
for which all term constraints are fulfilled.

\begin{enumerate}
\item
Choose a parametric polynomial interpretation $\AAA$ that maps all $f \in \Sig$
to parametric polynomials $f_\AAA(x_1,\ldots,x_n)$
where $n$ is the arity of $f$. For example, for a unary symbol
$\Fs$, we might choose
$\polf{\AAA}{\Fs}(x) = a_0 + a_1 \cdot x + a_2 \cdot x^2$.
\item
Add constraints on parameters to enforce
well-definedness and monotonicity
of $\AAA$~\cite{con:mar:tom:urb:05,neu:mid:zan:10}.
For example, we could require
$a_0 \geq 0 \land a_1 \geq 0 \land a_2 \geq 0$
for well-definedness of $\polf{\AAA}{\Fs}$
and  $a_1 + a_2 \geq 1$ for strict monotonicity.
\item
Replace term constraints by polynomial constraints:
$s \succsim t$ becomes $[s]_\AAA \geq [t]_\AAA$, and
$s \succ t$ becomes $[s]_\AAA > [t]_\AAA$ (or equivalently
$[s]_\AAA \geq [t]_\AAA + 1$)
with implicit existential quantification over interpretation parameters
and universal quantification over all
$x_1, \dots, x_n \in \Var(s) \cup \Var(t)$, expressed as:
$p(x_1,\dots,x_n) \geq 0$
where $p(x_1,\dots,x_n) = p_0 + p_1 \cdot x_1^{e_{1,1}} \cdot \dots \cdot x_n^{e_{1,n}} + \dots +
 p_m \cdot x_1^{e_{m,1}} \cdot \dots \cdot x_n^{e_{m,n}}$,
all $p_i$ are now polynomials over interpretation parameters
$a_i$
(rather than just integer numbers), and
$x_i \in \Var(s) \cup \Var(t)$.
\item\label{step:elim}
Eliminate
 all
universally quantified variables
$x_i$,
get inequalities only over
parameters $a_i$.
\item
Ask a constraint solver for a solution for the interpretation parameters
over $\mathbb{Z}$ (if any),
replace the $a_i$ in $\AAA$
with their values from the solution
(e.g., $a_0 = a_1 = 1$ and $a_2 = 0$
yields $\polf{\AAA}{\Fs}$ in \autoref{sec:illustrative}).
\end{enumerate}

Step \ref{step:elim} often applies
the absolute positiveness criterion.
\autoref{sec:illustrative} shows that this criterion is incomplete.
As partial mitigation,
I propose providing an explicit value $\minX \in \mathbb{N}$
as input for the search ($\minX = 1$ as in \autoref{sec:illustrative}
is not necessarily sufficient). Then Step~\ref{step:elim}
replaces
\begin{align}
p(x_1,\dots,x_n) &\geq 0
\end{align}
by
\begin{align}
\bigwedge_{1 \leq i \leq n,\;
 q_i \in \{ 0, \dots, \minX - 1 \} \cup \{ x_i + \minX \}} p(q_1, \dots, q_n)
&\geq 0
\end{align}
and only then applies the conventional absolute positiveness criterion.

A
downside of this approach is that the number of inequalities
for a term constraint with $n$ variables can reach $(\minX+1)^n$.
However, while $n$ is in the exponent, often $n$ is not very big
in practice. For example, for constraints obtained from
string rewrite systems, we have $n = 1$ (although here the deep nesting
of terms may lead to polynomials with large exponents
and complex parametric polynomials after substitution of $x$ by $x+\minX$).

\section{Limitation}
\label{sec:limitation}

In this section, I discuss a limitation of the proposed case split
for validity checks of polynomials over the natural numbers.
The modified criterion becomes distinct from regular absolute
positiveness only if polynomials of degree at least $2$ are considered.
For a polynomial with only \emph{linear} monomials,
absolute positiveness is a prerequisite for positiveness.

\begin{theorem}
Let $p(x_1, \dots, x_n) = a_0 + a_1 x_1 + \dots + a_n x_n$
be a linear polynomial where $a_0, a_1, \dots, a_n \in \mathbb{Z}$.
Then $\forall x_1, \dots, x_n \in \mathbb{N}. p(x_1, \dots, x_n) > 0$
holds iff $a_0 > 0$ and $a_1, \ldots, a_n \geq 0$ hold.
\end{theorem}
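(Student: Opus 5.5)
The plan is to prove the two directions separately. The backward direction is just the absolute positiveness criterion from \autoref{sec:illustrative}, applied to a linear polynomial. The forward direction I would prove by contraposition: for each way the coefficient conditions can fail, exhibit a concrete tuple in $\mathbb{N}^n$ at which $p$ is not positive.

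For the ``if'' direction, assume $a_0 > 0$ and $a_1, \ldots, a_n \geq 0$. For any $x_1, \dots, x_n \in \mathbb{N}$, each summand $a_i x_i$ is a product of two non-negative integers and hence is $\geq 0$. Therefore $p(x_1,\dots,x_n) \geq a_0 > 0$. Equivalently, one can simply invoke the result of Hong and Jaku\v{s} recalled earlier: absolute positiveness implies positiveness over $\mathbb{N}$.

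For the ``only if'' direction, assume $\forall x_1, \dots, x_n \in \mathbb{N}.\ p(x_1,\dots,x_n) > 0$.
\begin{itemize}
\item Instantiating all variables with $0$ gives $p(0,\dots,0) = a_0 > 0$.
\item Suppose some $a_i < 0$; since $a_i \in \mathbb{Z}$, this means $a_i \leq -1$. Set $x_j = 0$ for all $j \neq i$ and $x_i = a_0 \in \mathbb{N}$, which is allowed because $a_0 > 0$ was just shown.
\item Then $p = a_0 + a_i a_0 = a_0(1 + a_i) \leq 0$, since $a_0 > 0$ and $1 + a_i \leq 0$. This contradicts the assumption, so $a_i \geq 0$ for all $i$.
\end{itemize}

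The only step needing any care is the choice of witness in the second case. A choice such as $x_i = \lceil a_0 / |a_i| \rceil$ would also work, but $x_i = a_0$ avoids rounding entirely because $a_i$ is an integer. This is precisely where linearity is used: each variable occurs in exactly one monomial, so it can be isolated by setting all other variables to $0$, and a negative coefficient then dominates for large values. In contrast to the quadratic example of \autoref{sec:illustrative}, no higher-degree term can compensate.
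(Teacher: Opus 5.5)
Your proof is correct and follows essentially the same route as the paper: evaluate at the origin to get $a_0 > 0$, then isolate a negative $a_i$ by setting all other variables to $0$. The only difference is the witness. The paper takes $x_i = a_0 + 1$ to get a strictly negative value, while your choice $x_i = a_0$ gives $a_0(1+a_i) \leq 0$, which equally contradicts strict positivity.
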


\begin{proof}
``$\Leftarrow$'': The claim follows by absolute positiveness
of $p(x_1, \dots, x_n)$.

``$\Rightarrow$'':
Let $p(x_1, \dots, x_n)$ be of the above shape such that
$\forall x_1, \dots, x_n \in \mathbb{N}. p(x_1, \dots, x_n) \geq 0$.
Then $p(0,\dots,0) > 0$ implies $a_0 > 0$.

Now consider $i$ with $1 \leq i \leq n$. We have
$p(0, \dots, 0, x_i, 0, \dots, 0) = a_0 + a_i x_i$
(all other monomials are equal to $0$).
From $a_0 > 0$ and weak monotonicity of multiplication on $\mathbb{N}$,
we get that for $a_i \geq 0$ the statement
$\forall x_i \in \mathbb{N}. p(0, \dots, 0, x_i, 0, \dots, 0) > 0$
holds as required.
For $a_i < 0$, we would get the contradiction
$0 < p(0, \dots, 0, a_0+1, 0, \dots, 0)
= a_0 + a_i (a_0+1)
= a_0 (1 + a_i) + a_i
\leq a_i < 0$, so absolute positiveness is required.
\end{proof}

\section{Related Work}
\label{sec:related}

Neurauter, Middeldorp, and Zankl~\cite{neu:mid:zan:10} provide conditions
to ensure well-definedness and monotonicity of polynomial
interpretations to the
naturals of degree at most 3 where coefficients from $\mathbb{Z}$ are allowed.
These conditions also go beyond absolute positiveness,
but they consider the parametric polynomials assigned to function symbols
instead of compatibility with given term constraints
(i.e., whether an interpretation solves these term constraints).
From \cite[Remark~1]{neu:mid:zan:10}:

\begin{quote}
Since well-definedness of
a polynomial as defined above is equivalent to non-negativity of a polynomial
in $\mathbb{N}$, any method that ensures non-negativity of parametric polynomials can
also be used for checking compatibility. However, we remark that the method
presented in this paper is not ideally suited for this purpose as it also enforces
strict monotonicity, which is irrelevant for compatibility.
\end{quote}

For the question whether
$p(x_1, \dots, x_n) \geq 0$
is valid, monotonicity (strict or weak) of the polynomial function
$p(x_1, \dots, x_n)$ is indeed optional. For the example in
\autoref{sec:illustrative}, the criterion by Neurauter, Middeldorp, and Zankl
is not applicable, as the polynomial resulting from the term
constraint is indeed not monotonic: $p(0) = 2$, $p(1) = 1$, $p(2) = 2$.

\smallskip

The case analysis proposed in this extended abstract is also
related to the use of piecewise-defined monotone algebras in
termination proving.
Urban~\cite{urb:13} as well as Urban, Gurfinkel, and
Kahsai~\cite{urb:gur:kah:16} use piecewise-defined ranking
functions in the setting of imperative programs.
To prove termination of string rewrite systems,
Hofbauer~\cite{hof:18} proposes almost linear weight functions,
which consist of linear functions of the shape
$x + c$ for constant values $c$ combined with finitely
many exceptions for specific values of $x$.
Lucas and Guti\'errez~\cite{luc:gut:18} use piecewise definitions
for the automatic generation of models in an order-sorted
first-order setting with applications to proving termination
of term rewrite systems.
The approach of the present extended abstract is complementary
to these approaches:
its split into different cases can be applied independently
from the origin of the inequalities.

\section{Conclusion and Future Work}
\label{sec:conclusion}
I have presented a novel approach to improve the absolute positiveness
criterion
to determine validity of non-linear polynomial constraints.
The main motivation is in complexity analysis of term rewrite systems~\cite{hir:mos:08,nos:emm:gie:13},
where interpretations to non-linear polynomials of degree $m$
can induce
time complexity bounds of degree $m$. In this light, the illustrative
example in \autoref{sec:illustrative} is perhaps slightly less artificial
than it appears at first glance: for example, for a rewrite rule
$\Ff(x, y) \to \Fcons(\Fg(\Fh(x), y), \Fi(\Fs(x)))$ where $\Ff$,
$\Fg$, $\Fh$, and $\Fi$ are the defined symbols, the
\emph{Dependency Tuple Framework}~\cite{nos:emm:gie:13}
leads to a term constraint of the form
$\Ff^\sharp(x, y) \succ \Fc(\Fg^\sharp(\Fh(x), y), \Fh^\sharp(x), \Fi^\sharp(\Fs(x)))$
where $\Fc$ is a fresh constructor symbol.
Here a polynomial interpretation is needed that assigns a polynomial $x_1 + \dots + x_n$ to $\Fc$ (similar to \autoref{sec:illustrative}) and
uses non-linear interpretations as an explicit feature to find non-linear
complexities. It is worth noting that here nested function calls lead to
repetitions of subterms (and variables) on right-hand sides, similar
to our (artificial) repetition of the variable $x$ in the example.

An obvious next step would be to implement the proposed new preprocessing for absolute
positiveness in a termination and complexity analysis tool (e.g.,
\aprove~\cite{gie:asc:bro:emm:fro:fuh:hen:ott:plu:sch:str:swi:thi:17})
that searches for polynomial interpretations.
In the first instance, the parameter $\minX$ would be given by the user.
Finding sufficient criteria to determine upper bounds for $\minX$
beyond which a further case split provably does not lead to further power
would be future work. Orthogonally, heuristics to determine $\minX$ on
a per-variable basis rather than uniformly for all variables could mitigate
the combinatorial explosion mentioned at the end of \autoref{sec:automation}.

Finally, experiments on a large benchmark set such as the Termination
Problem DataBase (TPDB)\footnote{\url{https://github.com/TermCOMP/TPDB-ARI/}}
used for the annual Termination Competition~\cite{gie:rub:ste:wal:yam:19}
would provide insight into the practical usefulness of the approach.



\bibliography{references.bib}

\end{document}